\documentclass[aps,pra,twocolumn,dvips,nofootinbib]{revtex4-2}
\usepackage{blindtext}
\usepackage{babel}
\usepackage[linktocpage=true, colorlinks=true, linkcolor=red, urlcolor=blue, citecolor=blue]{hyperref}
\usepackage[utf8]{inputenc}
\usepackage{pslatex}
\usepackage{setspace}
\usepackage{bbold}
\usepackage{amsthm}
\usepackage{amsmath}
\usepackage{amssymb}
\usepackage{amsfonts}
\usepackage{mathtools}
\usepackage{paralist, tabularx}
\usepackage{bm}
\DeclareMathOperator{\Tr}{Tr}

\usepackage{indentfirst}
\usepackage{mathrsfs}
\usepackage{braket}
\usepackage{graphicx}
\usepackage[margin=2cm]{geometry}
\usepackage[colorinlistoftodos]{todonotes}
\theoremstyle{definition}
\newtheorem{definition}{Definition}[section]

\newtheorem{proposition}{Proposition}
\newtheorem{result}{Result}[]

\newcommand{\blk}{\color{black}}
\newcommand{\vini}{\color{black}}

\usepackage[export]{adjustbox}
\usepackage{tikz}
\usetikzlibrary{decorations.markings}
\usetikzlibrary{shapes.geometric}
\pagestyle{empty}
\pgfdeclarelayer{edgelayer}
\pgfdeclarelayer{nodelayer}
\pgfsetlayers{edgelayer,nodelayer,main}
\tikzstyle{none}=[inner sep=0pt]
\tikzstyle{simple}=[-,draw=Black,line width=2.000]
\tikzstyle{zbasis}=[fill=white, draw=black, shape=circle, minimum size=2mm]
\tikzstyle{xbasis}=[fill={rgb,255: red,191; green,191; blue,191}, draw=black, shape=circle]
\tikzstyle{hadamard}=[circle,fill=blue!50, draw=black,inner sep=1.5pt]
\tikzstyle{diamond}=[fill=white, draw=black, shape=diamond]
\pgfdeclarelayer{background}
\pgfsetlayers{background,edgelayer,nodelayer}

\usepackage{soul}

\begin{document}
	\title{Contextuality with vanishing coherence and maximal robustness to dephasing}
	\author{Vinicius P. Rossi}
	\email{prettirossi.vinicius@gmail.com}
	\author{David Schmid}
	\author{John H. Selby}
	\author{Ana Belén Sainz}
	\affiliation{International Centre for Theory of Quantum Technologies, University of Gda{\'n}sk, 80-309 Gda\'nsk, Poland}
	
	\begin{abstract}
		Generalized contextuality is a resource for a wide range of communication and information processing protocols. However, contextuality is not possible without coherence, and so can be destroyed by dephasing noise. Here, we explore the robustness of contextuality to partially dephasing noise in a scenario related to state discrimination (for which contextuality is a resource). We find that a vanishing amount of coherence is sufficient to demonstrate the failure of noncontextuality in this scenario, and we give a proof of contextuality that is robust to arbitrary amounts of partially dephasing noise. This is in stark contrast to partially depolarizing noise, which is always sufficient to destroy contextuality.
	\end{abstract}
	
	\maketitle
	
	Understanding what is nonclassical about quantum theory is crucial for determining which tasks can be optimally performed with quantum resources. One quantum resource that is useful in many tasks \vini within \blk computation \cite{COMPUTATION5}, communication \cite{COMMUNICATION1}, information processing \cite{PROCESSING1,PROCESSING2,PROCESSING3,PROCESSING5}, metrology \cite{METROLOGY1}, cloning \cite{CLONING1}, and state discrimination \cite{DAVID1, SD1,SD2,SD3}, is generalized contextuality~\cite{SPEKKENS} (henceforth referred to simply as `contextuality'). 
	
	A given experiment is said to be a proof of contextuality when its statistics are incompatible with the existence of a noncontextual ontological model, i.e., models wherein one's ontology is a set (of classical states), dynamics are represented as functions, where inferences are made using Bayesian probability theory and Boolean logic, and where a methodological version of the assumption of \emph{Leibnizianity} \cite{LEIBNIZ} is satisfied. This assumption stipulates that the explanation for procedures being indiscernible at the operational level is that they are also indiscernible at the ontological level \cite{OMELETTE,INTERFERENCE}. This notion of nonclassicality -- the nonexistence of a noncontextual ontological model -- was proven to be equivalent to other notions of nonclassicality, such as the nonexistence of a quasiprobability representation in quantum optics \cite{NEGATIVITY,STRUCTURE-THEO},  and the nonexistence of a simplex embedding in generalized probabilistic theories \cite{SIMPLEX}.  Furthermore, this notion of nonclassicality is closely related to the notions arising in the study of quantum Darwinism \cite{BALDIJAO}, macrorealism \cite{MACROREALISM}, Bell scenarios \cite{SEER,INEQUALITIES,VICKY}, and the detection of anomalous weak values \cite{WEAK}. In our view, generalized contextuality is our most well-motivated notion of nonclassicality. 
	
	Of particular interest to this work is the aforementioned notion of simplex embeddability. This simple geometric characterization of the notion of noncontextuality within the framework of GPTs has been useful for exploring the relationship between contextuality and incompatibility \cite{INCOMPATIBILITY, FRAGMENTS}. It has also been employed in the development of an open-source code for testing whether a given prepare-and-measure scenario constitutes a proof of contextuality, and, moreover, for providing a quantification of how robust to depolarizing noise this proof is \cite{CODE}. We will leverage this tool here in our study of the relationship between contextuality and coherence.
	
	It is well known that contextuality is always destroyed by partial (but sufficiently large) depolarizing noise \footnote{This fact has been noted in particular scenarios \cite{DAVID1,RAVIROBUST,IMAN}, and in general scenarios it follows immediately from simplex-embeddability.}. However, the question of how robust contextuality is to dephasing noise has not previously been studied.  Note that the existence of coherence does not immediately imply contextuality, since it is present in epistemically restricted theories \cite{INTERFERENCE,TOYTHEORY,TOY2,TOY3} for which noncontextual ontological models are known to exist. On the other hand, contextuality is not possible without coherence, a fact that we prove explicitly in Appendix \ref{app2}. However, this leaves open the question of how contextuality is affected by {\em partial} dephasing noise. This question is of particular importance given that decoherence theory \cite{DECOHERENCE1,DECOHERENCE2} shows that dephasing noise arises in generic open system dynamics.
	
	In this work, we show that there are proofs of contextuality that can be obtained with \emph{any} non-zero amount of coherence. We then modify the open-source linear program from Ref.~\cite{CODE} and use this to investigate the robustness of contextuality to the action of dephasing noise with respect to a fixed basis in a collection of prepare-and-measure scenarios. Finally, we find a proof of contextuality that is maximally robust to dephasing noise, in the sense that the experiment remains a proof of contextuality for any amount of decoherence apart from total decoherence.
	
	\section{Preliminary notions}\label{sec1} 
	
	The broad range of experiments we are interested in investigating consists of those that can be thought of as preparing a system in a laboratory in a variety of different ways and probing it with a variety of measurements. Such experiments are known as  \emph{prepare-and-measure scenarios}. These can be studied from a theory-agnostic viewpoint, where only a minimal set of \textit{operational} elements (i.e., properties or objects that are manifestly observable) are used to describe it. In such situations, one can analyze the experimental scenario without making any assumptions about the nature of the system in question, e.g., what its intrinsic properties are or how it behaves. Rather, one simply focuses on (i) the classical labels of the ways in which one may prepare this system ($P\in \mathcal{P}$), (ii) the classical labels of the measurement procedures one may perform ($M\in \mathcal{M}$), (iii) the classical labels of the outcomes $k\in K$ of the measurement procedures, (iv) the resulting statistics $\{p(k|M,P)\}_{ [k|M]\in K\times \mathcal{M}, P\in\mathcal{P}}$ of the experiment\vini, and (v) the operational equivalences  between different preparation or measurement outcome procedures, denoted denoted $\simeq$ and defined in the next paragraph\blk. The tuple $(\mathcal{P},\mathcal{M}, K, p\vini,\simeq\blk)$ that captures the information analyzed in such a prepare-and-measure scenario is often referred to as an \emph{operational scenario}. 
	
	An \emph{operational theory} is the set of all possible
	realisable preparations, measurements and outcomes and their respective statistics \vini -- i.e., it is the maximal operational scenario for a given system. \blk
	Often a given operational theory allows for a kind of equivalence between different procedures: sometimes two preparation procedures $P$ and $P'$ yield the same statistics for any possible measurement outcomes, or two measurement outcomes $[k|M],\, [k'|M']$ do so for any possible preparation. When this happens, we say that $P$ and $P'$ (resp.~$[k|M]$ and $[k'|M']$) are \emph{inferentially equivalent}, denoted by $P\simeq P'$ (resp.~$[k|M]\simeq [k'|M']$). \vini Here, it is crucial to assess such equivalences with respect to the full operational theory, and not only relative to the specific preparations or measurement outcomes in the specific operational scenario under investigation.  \blk \vini That is, we define the operational equivalence relation for an operational scenario as the one that it inherits from the operational theory in which it lives. \blk
	
	Since inferentially-equivalent processes cannot be distinguished by the operational predictions $p$ they generate, it is often useful to discard this {\em context} information by identifying inferentially-equivalent processes with a single representative of the group. This operation is termed \textit{quotienting} \cite{CHIRIBELLA} and provides the way to construct a so-called \emph{generalized probabilistic theory} (GPT) \cite{GPT0,GPT1,GPT2} for a corresponding operational theory. See Appendix \ref{app3} for a concrete definition of a GPT.
	
	An {\em ontological model} seeks to explain the observed statistics in one's scenario by associating (i) the system to some set of ontic states $\Lambda$, (ii) preparations $P$ to \emph{epistemic states}, $\mu_P$, which are probability distributions over $\Lambda$, and (iii) measurement-outcome pairs $[k|M]$ in the operational theory to \emph{response functions} $\xi_{k|M}$ over $\Lambda$, such that $p(k|M,P)=\sum_{\lambda \in \Lambda} \xi_{k|M}(\lambda)\mu_P(\lambda)$. An ontological model is said to be noncontextual if inferentially-equivalent preparations are mapped to the same epistemic state, and inferentially-equivalent measurement-outcome pairs are mapped to the same response function\footnote{\vini It is for this reason that inferential equivalences should be assessed relative to the entire scope of possible procedures in the theory rather than those of the scenario -- it does not make sense to impose a constraint on an ontological description which is contingent on what we happen to have chosen to do in a given experiment.\blk}. For the purpose of this paper, it suffices to know that the notion of a noncontextual ontological model for the operational theory has an equivalent characterization at the level of the GPT related to it via quotienting. That is, a GPT is associated to an operational theory that is noncontextual if and only if the GPT is simplex-embeddable \cite{SIMPLEX}. Intuitively, such GPTs have a state space that fits inside a simplex, and an effect space that fits inside the dual to the simplex. We define simplex-embeddability formally in Appendix \ref{app3}. 
	
	The existence of a simplex-embedding can be tested using the linear program  introduced in Ref.~\cite{CODE}.
	In the case of quantum theory (which is the case we study here) the linear program simply takes as input a set of density matrices (representing the preparations) and a set of POVM elements (representing the measurement-outcome pairs), and checks whether or not these are simplex-embeddable, and consequently, whether the quantum scenario admits of a noncontextual ontological model. Furthermore, in the case that the code fails to find a simplex embedding, it computes how much depolarizing noise must be added to the input states (or equivalently, measurements) such that a simplex embedding is found. In this work, we are interested in studying quantum prepare-and-measure scenarios under the action of \emph{dephasing} rather than depolarizing noise, so we modify the code from Ref.~\cite{CODE} to estimate the robustness to dephasing rather than depolarization. A summary of how the code works and of our modifications to it is given in Appendix \ref{app3}.
	
	\section{Proof of contextuality with vanishing coherence}\label{sec2}
	
	Inspired by Ref.~\cite{DAVID1} -- which demonstrates that contextuality is a resource powering an advantage for minimum-error state discrimination (MESD) -- we focus on a prepare-and-measure scenario constructed from the MESD scenario. Our scenarios of interest consist of four preparations $\{P_\psi, P_{\bar{\psi}},P_\phi,P_{\bar{\phi}}\}$ and three binary measurements $\{M_\psi,M_\phi,M_g\}$. The preparations consist of pure states $\ket{\psi}$, $\ket{\phi}$ of a qubit system, and their orthogonal counterparts. That is, $P_\psi \rightarrow \ket{\psi}\bra{\psi}$ and $P_{\bar{\psi}} \rightarrow \ket{{\bar{\psi}}}\bra{{\bar{\psi}}}$, with $\langle \psi \vert {\bar{\psi}} \rangle = 0$ (and similarly for $P_\phi$). Measurements $M_\psi$ and $M_\phi$ are simply projections onto $\ket{\psi}$ and $\ket{\phi}$, respectively, while $M_g$ is the Helstrom measurement, comprised of projectors onto the basis that straddles $\ket{\psi}$ and $\ket{\phi}$ \cite{HALSTROM}. As  all these preparations and measurements lie within a two-dimensional slice of the Bloch sphere, we can, without loss of generality, take this slice to be the $ZX$ plane of the Bloch sphere. Furthermore, we fix our system of coordinates such that the projectors $E_{g_\psi}$ and $E_{g_\phi}$ associated with the measurement $M_g$ lie aligned to the $Z$ axis. The preparations and measurements in the scenario can be parameterized by the angle $\theta\in\left[0,\frac{\pi}{2}\right]$ between any of the preparations and the $Z$ axis, as shown in Figure \ref{Fig1}.
	We will consider dephasing relative to the $Z$ axis.
	\\

	\begin{figure}[b]
		\centering
		\begin{tabular}{cc}\vspace{5pt}
			\adjustbox{scale=0.8,valign=m}{\begin{tikzpicture}
	\begin{pgfonlayer}{nodelayer}
		\node [style=none] (0) at (3.925, 3.25) {};
		\node [style=none] (1) at (2.675, 2) {};
		\node [style=none] (2) at (3.925, 0.75) {};
		\node [style=none] (3) at (5.175, 2) {};
		\node [style=hadamard] (4) at (4.825, 2.825) {};
		\node [style=hadamard] (5) at (3.025, 1.175) {};
		\node [style=none] (6) at (5.525, 3.2) {$\ket{\psi}\bra{\psi}$};
		\node [style=hadamard] (7) at (3.025, 2.825) {};
		\node [style=hadamard] (8) at (4.825, 1.175) {};
		\node [style=none] (9) at (2.425, 3.2) {$\ket{\bar{\phi}}\bra{\bar{\phi}}$};
		\node [style=none] (10) at (2.35, 0.75) {$\ket{\bar{\psi}}\bra{\bar{\psi}}$};
		\node [style=none] (11) at (5.45, 0.75) {$\ket{\phi}\bra{\phi}$};
		\node [style=none] (12) at (3.925, 2.45) {};
		\node [style=none] (13) at (4.275, 2.325) {};
		\node [style=none] (14) at (4.175, 2.7) {$\theta$};
		\node [style=none] (15) at (3.925, 1.55) {};
		\node [style=none] (16) at (4.275, 1.675) {};
		\node [style=none] (17) at (4.175, 1.3) {$\theta$};
	\end{pgfonlayer}
	\begin{pgfonlayer}{edgelayer}
		\draw [bend left=45] (1.center) to (0.center);
		\draw [bend left=45] (0.center) to (3.center);
		\draw [bend left=45] (3.center) to (2.center);
		\draw [bend left=45] (2.center) to (1.center);
		\draw (5) to (4);
		\draw (8) to (7);
		\draw [dashed] (0.center) to (2.center);
		\draw [bend left] (12.center) to (13.center);
		\draw [bend right] (15.center) to (16.center);
	\end{pgfonlayer}
\end{tikzpicture}} &\adjustbox{scale=0.8,valign=m}{\begin{tikzpicture}
	\begin{pgfonlayer}{nodelayer}
		\node [style=hadamard] (0) at (3.925, 3.25) {};
		\node [style=hadamard] (1) at (3.925, 0.75) {};
		\node [style=none] (2) at (3.925, 3.675) {$E_{g_\psi}$};
		\node [style=none] (3) at (3.925, 0.25) {$E_{g_\phi}$};
		\node [style=none] (4) at (3.925, 2.45) {};
		\node [style=none] (5) at (4.275, 2.325) {};
		\node [style=none] (6) at (4.175, 2.7) {$\theta$};
		\node [style=none] (7) at (3.925, 3.25) {};
		\node [style=none] (8) at (2.675, 2) {};
		\node [style=none] (9) at (3.925, 0.75) {};
		\node [style=none] (10) at (5.175, 2) {};
		\node [style=hadamard] (11) at (4.825, 2.825) {};
		\node [style=hadamard] (12) at (3.025, 1.175) {};
		\node [style=none] (13) at (5.525, 3.2) {$\ket{\psi}\bra{\psi}$};
		\node [style=hadamard] (14) at (3.025, 2.825) {};
		\node [style=hadamard] (15) at (4.825, 1.175) {};
		\node [style=none] (16) at (2.425, 3.2) {$\ket{\bar{\phi}}\bra{\bar{\phi}}$};
		\node [style=none] (17) at (2.35, 0.75) {$\ket{\bar{\psi}}\bra{\bar{\psi}}$};
		\node [style=none] (18) at (5.45, 0.75) {$\ket{\phi}\bra{\phi}$};
		\node [style=none] (19) at (3.925, 1.55) {};
		\node [style=none] (20) at (4.275, 1.675) {};
		\node [style=none] (21) at (4.175, 1.3) {$\theta$};
	\end{pgfonlayer}
	\begin{pgfonlayer}{edgelayer}
		\draw (1) to (0);
		\draw [bend left] (4.center) to (5.center);
		\draw [bend left=45] (8.center) to (7.center);
		\draw [bend left=45] (7.center) to (10.center);
		\draw [bend left=45] (10.center) to (9.center);
		\draw [bend left=45] (9.center) to (8.center);
		\draw (12) to (11);
		\draw (15) to (14);
		\draw [bend right] (19.center) to (20.center);
	\end{pgfonlayer}
\end{tikzpicture}}
		\end{tabular}
		\caption{Preparation (left) and measurement (right) procedures in the studied scenario, represented on a 2d slice of the Bloch sphere. The vertical axis is taken to be the $Z$ axis.}\label{Fig1}
	\end{figure}
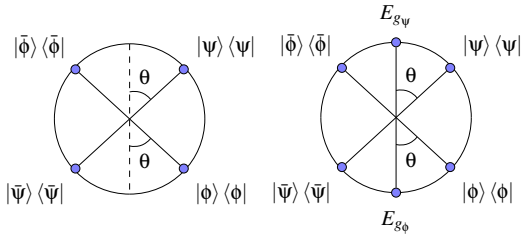
	Notice that the parameter $\theta$ is closely related to the amount of coherence (relative to the $Z$ axis) in the preparation and measurement procedures. If the coherence quantifier, $C$, is the trace distance of the state from the fully dephased version of the state \cite{COHERENCE}, for instance, then we find that
	\begin{equation}\label{quantif}
		C\left(\ket{\psi}\!\!\bra{\psi}\right):=\left\|\ket{\psi}\!\!\bra{\psi}-\sum_{i=0}^1|\!\!\braket{i|\psi}\!\!|^2\ket{i}\!\!\bra{i}\right\|_{1}=\sin\theta,
	\end{equation}
	with the same result for all preparations and measurements (other than $E_{g_\psi}$ and $E_{g_\phi}$ for which the coherence is zero). Hence, increasing $\theta$ means increasing the coherence in both the states and effects.
	
	Ref.~\cite{DAVID1} discusses the consequences of noncontextuality for this scenario. Ref.~\cite{DAVID1} shows that quantum theory allows for a higher probability of success at distinguishing $\ket{\psi}$ from $\ket{\phi}$ than is possible in any noncontextual theory.
	Hence, there is a quantum advantage for this task coming from contextuality. Ref.~\cite{DAVID1}  also analytically estimates how much depolarizing noise, $r_{\text{min}}$, must be added to the quantum model until this quantum advantage disappears, i.e., until one's quantum measurements perform no better than noncontextual measurements could.  
	Starting from the expression for depolarized effects
	\begin{equation}
		E\mapsto\mathcal{D}_r^{\text{depol}}(E):=(1-r)E+\frac{r}{2} \mathbb{1},
	\end{equation}
	imposing  the existence of a noncontextual model for the quantum scenario implies that
	\begin{equation}
		r^\text{depol}_\text{min} =1-\frac{1}{\sin^2\theta+\cos\theta}.
	\end{equation}
	This was first computed in Ref.~\cite{DAVID1}, although with a minor error that we correct in our proof in Appendix \ref{app1}.
	
	\begin{figure}[t]
		\centering
		\begin{tabular}{cc}
			\adjustbox{scale=0.75,valign=m}{\begin{tikzpicture}
			\begin{pgfonlayer}{background}
		\fill[gray!20] (9.175, 3.5) to[bend right=45] (7.675, 2) to[bend right=45] (9.175, 0.5) to[bend right=45] (10.675, 2) to[bend right=45] (9.175, 3.5);
		\fill [white] (9.175, 2.875) to[bend right=45] (8.3, 2) to[bend right=45] (9.175, 1.125) to[bend right=45] (10.05, 2) to[bend right=45]  cycle;
	\end{pgfonlayer}
	\begin{pgfonlayer}{nodelayer}
		\node [style=hadamard] (1) at (9.175, 2.875) {};
		\node [style=hadamard] (2) at (9.175, 1.125) {};
		\node [style=none] (3) at (9.175, 3.925) {$E_{g_\psi}$};
		\node [style=none] (4) at (9.175, 0) {$E_{g_\phi}$};
		\node [style=none] (5) at (9.5, 2.25) {};
		\node [style=none] (7) at (9.175, 2.875) {};
		\node [style=none] (8) at (8.225, 2) {};
		\node [style=none] (9) at (9.175, 1.125) {};
		\node [style=none] (10) at (10.05, 2) {};
		\node [style=none] (13) at (10.875, 3.2) {$\ket{\psi}\bra{\psi}$};
		\node [style=hadamard] (11) at (9.9, 2.55) {};
		\node [style=hadamard] (12) at (8.45, 1.45) {};
		\node [style=hadamard] (14) at (8.45, 2.55) {};
		\node [style=hadamard] (15) at (9.9, 1.45) {};
		\node [style=none] (16) at (7.525, 3.2) {$\ket{\bar{\phi}}\bra{\bar{\phi}}$};
		\node [style=none] (17) at (7.5, 0.75) {$\ket{\bar{\psi}}\bra{\bar{\psi}}$};
		\node [style=none] (18) at (10.8, 0.75) {$\ket{\phi}\bra{\phi}$};
		\node [style=none] (26) at (9.175, 3.5) {};
		\node [style=none] (27) at (7.675, 2) {};
		\node [style=none] (28) at (9.175, 0.5) {};
		\node [style=none] (29) at (10.675, 2) {};
		\node [style=none] (30) at (7.75, 2) {};
		\node [style=none] (31) at (8.025, 1.725) {$r$};
		\node [style=none] (32) at (8.3, 2) {};
		\node [style=none] (33) at (9.175, 2.25) {};
		\node [style=none] (34) at (9.352, 2.153) {};
		\node [style=none] (35) at (9.35, 2.4) {$\theta$};
		\node [style=none] (36) at (10.375, 2.91) {};
		\node [style=none] (37) at (7.945, 1.07) {};
		\node [style=none] (38) at (10.375, 1.09) {};
		\node [style=none] (39) at (7.945, 2.93) {};
	\end{pgfonlayer}
	\begin{pgfonlayer}{edgelayer}
		\draw (1) to (2);
		\draw (12) to (11);
		\draw (15) to (14);
		\draw [bend left=45] (27.center) to (26.center);
		\draw [bend left=45] (26.center) to (29.center);
		\draw [bend left=45] (29.center) to (28.center);
		\draw [bend left=45] (28.center) to (27.center);
		\draw [dashed] (26.center) to (28.center);
		\draw [|-|] (30.center) to (8.center);
		\draw [bend left=45] (32.center) to (7.center);
		\draw [bend right=45] (10.center) to (7.center);
		\draw [bend right=45] (32.center) to (9.center);
		\draw [bend left=45] (10.center) to (9.center);
		\draw [bend left] (33.center) to (34.center);
		\draw [dashed] (36) to (37);
		\draw [dashed] (38) to (39);
	\end{pgfonlayer}
\end{tikzpicture}} & \adjustbox{valign=m}{\includegraphics[scale=0.4]{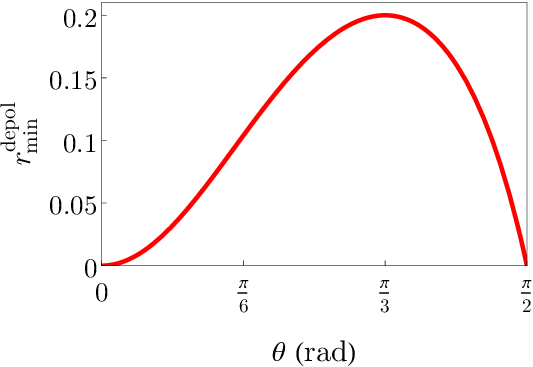}}
		\end{tabular}
		\caption{Left: representation of the action of {\bf depolarizing noise} on the scenario. Right: analytical plot of contextual robustness to depolarization as a function of the angle between the prepared states and the $Z$ axis.}
		\label{Fig2}
	\end{figure}
	
	This is plotted in Figure~\ref{Fig2}, from which we can see that even small values of $\theta$ allow for a proof of contextuality. Indeed, the robustness to depolarization is null only for $\theta=0$ and $\theta=\frac{\pi}{2}$, circumstances in which $\ket{\psi}$ is equal to either $\ket{\phi}$ or $\ket{\bar{\phi}}$. In this case, the cardinality of the sets of states/effects decreases, and simplex-embeddability becomes possible. Since coherence vanishes as $\theta$ goes to zero, this establishes the following result (which was not previously recognized, although it requires only the results of Ref.~\cite{DAVID1} reiterated just above):
	\begin{result}
		{\em There are proofs of the failure of noncontextuality that can be achieved in a prepare-and-measure scenario with vanishing (but nonzero) coherence among both the states and the effects.}
	\end{result}
	
	This particular scenario is not very robust to depolarization noise. Even at its peak \footnote{Interestingly, this peak occurs at $\theta=\frac{\pi}{3}$, where one finds the same set of effects considered in Ref.~\cite{SPEKKENS} to obtain the first proof of the failure of measurement noncontextuality for POVMs in a two-dimensional system.}, for $\theta=\frac{\pi}{3}$, the robustness to depolarization is   $0.2$. Moreover, the robustness goes smoothly to zero as the coherence goes to 0, and consequently is very small for small coherence. As we will see, this same scenario is considerably more robust to dephasing, and a closely related scenario is in fact maximally robust to dephasing.
	
	\section{Proof of contextuality maximally robust to dephasing}
	Next, we explore how contextuality behaves in this scenario under dephasing (rather than depolarizing) noise.
	In this case, the noisy projectors are given by
	\begin{equation}  E\mapsto\mathcal{D}_r^{\text{deph}}(E)\\:=(1-r)E+r\sum_{i\in\{0,1\}}\braket{i|E|i}|i\rangle\langle i|,
	\end{equation}
	where $\{\ket{i}\}_{i\in\{0,1\}}$ is the $Z$ basis. Imposing the existence of a noncontextual model for the scenario implies that
	\begin{equation}
		r^\text{deph}_\text{min} =1-\frac{1-\cos\theta}{\sin^2\theta}
	\end{equation}
	as proven in Appendix \ref{app1} and plotted in Figure \ref{Fig3}.

	\begin{figure}[t]
		\centering
		\begin{tabular}{cc}
			\adjustbox{scale=0.75,valign=m}{\begin{tikzpicture}

	\begin{pgfonlayer}{nodelayer}
		\node [style=hadamard] (0) at (3.925, 3.5) {};
		\node [style=hadamard] (1) at (3.925, 0.5) {};
		\node [style=none] (2) at (3.925, 3.925) {$E_{g_\psi}$};
		\node [style=none] (3) at (3.925, 0) {$E_{g_\phi}$};
		\node [style=none] (7) at (3.925, 3.5) {};
		\node [style=none] (8) at (2.425, 2) {};
		\node [style=none] (9) at (3.925, 0.5) {};
		\node [style=none] (10) at (5.425, 2) {};
		\node [style=hadamard] (11) at (4.45, 2.975) {};
		\node [style=hadamard] (12) at (3.4, 0.955) {};
		\node [style=none] (13) at (5.625, 3.575) {$\ket{\psi}\bra{\psi}$};
		\node [style=hadamard] (14) at (3.4, 2.975) {};
		\node [style=hadamard] (15) at (4.45, 0.955) {};
		\node [style=none] (16) at (2.325, 3.575) {$\ket{\bar{\phi}}\bra{\bar{\phi}}$};
		\node [style=none] (17) at (2.25, 0.375) {$\ket{\bar{\psi}}\bra{\bar{\psi}}$};
		\node [style=none] (18) at (5.55, 0.375) {$\ket{\phi}\bra{\phi}$};
		\node [style=none] (19) at (2.525, 1.9625) {};
		\node [style=none] (20) at (3.175, 1.9625) {};
		\node [style=none] (21) at (2.84, 1.762) {$r$};
		\node [style=none] (22) at (5.025, 2.975) {};
		\node [style=none] (23) at (2.835, 0.955) {};
		\node [style=none] (24) at (5.025, 0.955) {};
		\node [style=none] (25) at (2.835, 2.975) {};
		\node [style=none] (26) at (3.925, 2.4625) {};
		\node [style=none] (27) at (4.265,2.2685) {};
		\node [style=none] (28) at (4.05,2.7) {$\theta$};
	\end{pgfonlayer}
	\begin{pgfonlayer}{edgelayer}
		\draw (1) to (0);
		\draw [bend left=45] (8.center) to (7.center);
		\draw [bend left=45] (7.center) to (10.center);
		\draw [bend left=45] (10.center) to (9.center);
		\draw [bend left=45] (9.center) to (8.center);
		\draw (12) to (11);
		\draw (15) to (14);
		\draw [bend left=30] (0.center) to (11.center);
		\draw [bend left=15] (11.center) to (15.center);
		\draw [bend left=30] (15.center) to (1.center);
		\draw [bend left=30] (1.center) to (12.center);
		\draw [bend left=15] (12.center) to (14.center);
		\draw [bend left=30] (14.center) to (0.center);
		\draw [{|-|}] (20.center) to (19.center);
		\draw [dashed] (22.center) to (23.center);
		\draw [dashed] (24.center) to (25.center);
		\draw [bend left] (26.center) to (27.center);
	\end{pgfonlayer}

	\begin{pgfonlayer}{background}
	\fill[gray!20] (7) to[bend right=45] (8) to[bend right=45] (9) to[bend right=45] (10) to[bend right=45] (7);
	\fill [white] (0) to[bend left=30] (11) to[bend left=15] (15) to[bend left=30] (1) to[bend left=30]  (12) to[bend left=15] (14) to[bend left=30] (0);
\end{pgfonlayer}
\end{tikzpicture}} &
			\adjustbox{valign=m}{\includegraphics[scale=0.4]{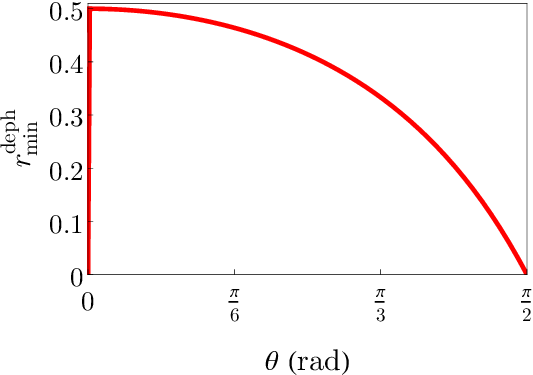}}
		\end{tabular}
		\caption{Left: representation of the action of {\bf dephasing noise} on the plane in which measurements live. Right: analytical plot of contextual robustness to dephasing as a function of the angle between the prepared states and the $Z$ axis.}
		\label{Fig3}
	\end{figure}
	
	Figure \ref{Fig3} shows that the amount of contextuality {\em decreases} monotonically as coherence increases (at least according to these measures of contextuality and coherence). While this might at first seem counterintuitive, one can understand it by noting that even under large dephasing noise, states and effects with little coherence to begin with are barely affected; that is, the dephasing channel is close to identity on such states and effects. However, this intuition only goes so far, as we will give an example below where one achieves twice the robustness to dephasing by including effects that have maximal coherence in the dephased basis.
	Just as was the case for depolarizing noise, the robustness to dephasing drops to zero when $\ket{\psi}=\ket{\phi}$, which implies a discontinuity in the plot at $\theta=0$, where  $r^\text{deph}_{\text{min}}$ falls from $0.5$ to $0$.
	
	Moreover, note that the maximum robustness relative to dephasing (0.5) is much higher than the maximal robustness to depolarization (0.2).
	A natural question is whether one can find scenarios where the contextual robustness to dephasing approaches its logical maximum, or whether (as for the contextual robustness to depolarization) this quantity is always bounded from above. 
	In the following, we identify such a scenario by carrying out a numerical exploration using a modified version of the linear program from Ref.~\cite{CODE}.
	
	Consider modifying the above scenario by rotating the measurement $M_g$ from the $Z$ to the $X$ axis, as in Figure~\ref{Fig4}, thus maximizing the coherence in the effects associated with this measurement. 
	In the case of depolarizing noise, this scenario is equivalent by symmetry to the original scenario under relabeling. Here, however, we are interested in the case of dephasing noise in the $Z$ basis. (Note that this scenario is equivalent to the original scenario under relabeling and considering the noise to be in the $X$ basis rather than the $Z$ basis.)
	The robustness to dephasing for this scenario as a function of $\theta$ is plotted in Figure~\ref{Fig4}. The most striking feature of this plot is that the robustness approaches $1$ as $\theta\to0$, so that the scenario achieves the maximum logically possible robustness to dephasing. This is in stark contrast with the original scenario (where the maximal dephasing robustness was $0.5$). 
	
	Moreover, notice that if we start with the scenario  described in Fig.  \ref{Fig4} and then dephase by some $r$ such that $1-r$ is vanishingly small, then we can view the dephased scenario as a new scenario that has only a vanishing amount of coherence in the measurements, but which is still contextual and indeed is itself robust to arbitrary amounts of dephasing noise. (This follows from the fact that dephasing it by some $r'$ is the same as dephasing the original scenario by $1-(1-r')(1-r)>0$.) Thus, we have established the following:
	\begin{result}
		{\em There are proofs of the failure of noncontextuality that can be achieved in a prepare-and-measure scenario in the presence of arbitrarily large dephasing noise. One may moreover find some scenarios of this sort where the states and effects have vanishing (but nonzero) coherence.}
	\end{result}
	
	\begin{figure}[t!]
		\centering
		\begin{tabular}{cc}
			\adjustbox{scale=0.7,valign=m}{\begin{tikzpicture}
	
	\begin{pgfonlayer}{nodelayer}
		\node [style=hadamard] (0) at (4.6, 1.9625) {};
		\node [style=hadamard] (1) at (3.25, 1.9625) {};
		\node [style=none] (2) at (5.9, 1.9625) {$E_{g_\psi}$};
		\node [style=none] (3) at (2, 1.9625) {$E_{g_\phi}$};
		\node [style=none] (7) at (3.925, 3.5) {};
		\node [style=none] (8) at (2.425, 2) {};
		\node [style=none] (9) at (3.925, 0.5) {};
		\node [style=none] (10) at (5.425, 2) {};
		\node [style=hadamard] (11) at (4.45, 2.975) {};
		\node [style=hadamard] (12) at (3.4, 0.955) {};
		\node [style=none] (13) at (5.625, 3.575) {$\ket{\psi}\bra{\psi}$};
		\node [style=hadamard] (14) at (3.4, 2.975) {};
		\node [style=hadamard] (15) at (4.45, 0.955) {};
		\node [style=none] (16) at (2.325, 3.575) {$\ket{\bar{\phi}}\bra{\bar{\phi}}$};
		\node [style=none] (17) at (2.25, 0.375) {$\ket{\bar{\psi}}\bra{\bar{\psi}}$};
		\node [style=none] (18) at (5.55, 0.375) {$\ket{\phi}\bra{\phi}$};
		\node [style=none] (19) at (2.525, 1.9625) {};
		\node [style=none] (20) at (3.15, 1.9625) {};
		\node [style=none] (21) at (2.84, 1.762) {$r$};
		\node [style=none] (22) at (5.025, 2.975) {};
		\node [style=none] (23) at (2.835, 0.955) {};
		\node [style=none] (24) at (5.025, 0.955) {};
		\node [style=none] (25) at (2.835, 2.975) {};
		\node [style=none] (26) at (3.925, 2.4625) {};
		\node [style=none] (27) at (4.265,2.2685) {};
		\node [style=none] (28) at (4.05,2.7) {$\theta$};
	\end{pgfonlayer}
	\begin{pgfonlayer}{edgelayer}
	\draw (1) to (0);
	\draw [dashed] (7) to (9);
	\draw [bend left=45] (8.center) to (7.center);
	\draw [bend left=45] (7.center) to (10.center);
	\draw [bend left=45] (10.center) to (9.center);
	\draw [bend left=45] (9.center) to (8.center);
	\draw (12) to (11);
	\draw (15) to (14);
	\draw [bend left=30] (7.center) to (11.center);
	\draw [bend left=15] (11.center) to (15.center);
	\draw [bend left=30] (15.center) to (9.center);
	\draw [bend left=30] (9.center) to (12.center);
	\draw [bend left=15] (12.center) to (14.center);
	\draw [bend left=30] (14.center) to (7.center);
	\draw [{|-|}] (20.center) to (19.center);
	\draw [dashed] (22.center) to (23.center);
	\draw [dashed] (24.center) to (25.center);
	\draw [bend left] (26.center) to (27.center);
\end{pgfonlayer}

	\begin{pgfonlayer}{background}
		\fill[gray!20] (7) to[bend right=45] (8) to[bend right=45] (9) to[bend right=45] (10) to[bend right=45] (7);
		\fill [white] (7) to[bend left=30] (11) to[bend left=15] (15) to[bend left=30] (9) to[bend left=30]  (12) to[bend left=15] (14) to[bend left=30] (7);
	\end{pgfonlayer}
\end{tikzpicture}} &
			\adjustbox{valign=m}{\includegraphics[scale=0.4]{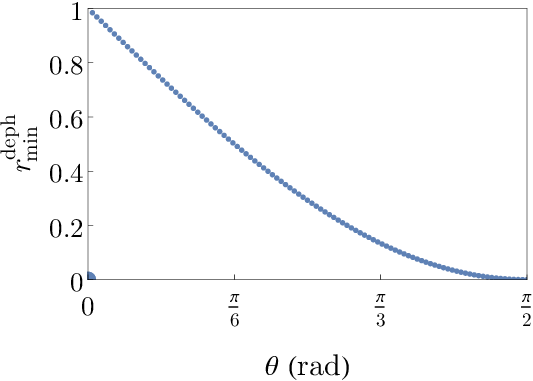}}
		\end{tabular}
		\caption{Left: representation of the action of {\bf dephasing noise} on the plane in which measurements of the rotated scenario live. Right: numeric estimation of contextual robustness to dephasing as a function of the angle between the prepared states and the $Z$ axis, with the extra measurement now lying along the $X$ axis.}
		\label{Fig4}
	\end{figure}
	
	Based on the intuitive arguments above, one may have expected the original scenario (Fig.~\ref{Fig3}) to be more robust than this rotated one (Fig.~\ref{Fig4}), since the highly coherent effects in the latter case are strongly affected by dephasing noise. However, this effect is clearly more than compensated for by the fact that the effects $E_{g_\phi}$ and $E_{g_\psi}$, in this case, are further from the other states and effects in the scenario, thus making it more difficult to find a simplex embedding of the GPT.
	
	In Appendix \ref{app4} we present a numerical plot scanning between the case in which measurement $M_g$ is aligned with the $Z$ axis and when it is aligned with the $X$ axis, showing that the latter is indeed the only scenario in this family for which maximal robustness is achieved.
	
	A final natural question is how contextuality is affected in the presence of both dephasing and depolarizing noise. In Appendix \ref{app4}, we study contextual robustness to dephasing in a scenario where a small amount of depolarizing noise is added to the measurement $M_g$, and show that even when $M_g$ is partially depolarized, proofs of contextuality can be obtained in the dephased scenario as long as the depolarizing noise on $M_g$ does not surpass a bound given by the amount of coherence available from the states.
	
	\section{Discussion}\label{sec3} 
	
	We have exhibited scenarios in which any nonzero amount of coherence is enough to prove the failure of the assumption of noncontextuality, and introduced an example in which a proof of contextuality can be robust to any amount of dephasing noise other than complete dephasing. This work  showcases the versatility of the linear program from Ref.~\cite{CODE} and invites further research on how robust specific proofs of contextuality are under different types of noise.
	
	Another recent work explored the connection between coherence and contextuality, using event graphs~\cite{RAFA}. Violations of graph inequalities that witness both basis-independent coherence and contextuality~\cite{OVERLAP, SET} were derived in Ref.~\cite{RAFA} and applied to a similar scenario \cite{RAFA2}, in this case with $6$ preparations rather than $4$. These inequalities are not violated for the whole interval $\theta\in\left(0,\frac{\pi}{2}\right)$, in contrast to the scenario studied herein.
	To get some intuition on why, notice that the existence of basis-independent coherence is not sufficient to guarantee the failure of noncontextuality. To see this, recall that contextuality always goes to zero by partial depolarization, and yet the only depolarizing process that destroys all basis-independent coherence is the totally depolarizing process. 
	
	\section*{Acknowledgements} 
	We thank Elie Wolfe for the ready support with the linear program functioning and Rafael Wagner for the fruitful discussion. D.S. and V.R. were supported by the Foundation for Polish Science (IRAP project, ICTQT, Contract No. MAB/2018/5, cofinanced by EU within Smart Growth Operational Programme). J.H.S. was supported by the National Science Centre, Poland (Opus project, Categorical Foundations of the Non-Classicality of Nature, Project No. 2021/41/B/ST2/03149). A.B.S. acknowledges support by the Digital Horizon Europe project FoQaCiA, Foundations of quantum computational advantage, GA No. 101070558, funded by the European Union, NSERC (Canada), and UKRI (U.K.). Some figures were prepared using TikZit.
	
	\bigskip
	\onecolumngrid
	\appendix
	\vspace{-20pt}\section{Analytical derivation of the robustness to depolarization and dephasing}\label{app1}

	\subsection{Robustness to depolarization} 
	The prepare-and-measure scenario of interest for our purposes consists of $4$ preparations denoted $P \in\{P_\psi, P_{\bar{\psi}}, P_\phi, P_{\bar{\phi}}\}=:\mathcal{P}$ and $3$ binary-outcome measurements $k|M\in \{0,1\}\times\{M_\phi,M_\psi,M_g\}=:K\times\mathcal{M}$. Without any further restrictions, there would be $12$ free parameters in the data table $\{p(k|M,P)\}_{P\in\mathcal{P},[k|M]\in K\times\mathcal{M}}$. However, Ref.~\cite{DAVID1} imposes further constraints on these preparations and measurement outcomes, based on the symmetries of $\mathcal{P}$ and $\mathcal{M}$ and on the inferential equivalence 
	\begin{equation}\label{equiv}
		\frac12 P_\psi+\frac12 P_{\bar{\psi}}=\frac12 P_\phi+\frac12 P_{\bar{\phi}}.
	\end{equation}
	As a result, any data table satisfying these will necessarily have only three free parameters. In particular, Ref.~\cite{DAVID1} takes the three parameters to be denoted by $s$, $c$, and $\epsilon$, which are related to the statistics of the operational theory as represented in Table \ref{Table1}.
	\begin{table}[h]
		\centering
		\begin{tabular}{c|cccc}\hline
			$p(k|M,P)$ & $P_\phi$ & $P_\psi$ & $P_{\bar{\phi}}$ & $P_{\bar{\psi}}$\\\hline
			$0|M_\phi$ & $1-\epsilon$ & $c$ & $\epsilon$ & $1-c$\\
			$0|M_\psi$ & $c$ & $1-\epsilon$ & $1-c$ & $\epsilon$\\
			$0|M_g$ & $s$ & $1-s$ & $1-s$ & $s$\\\hline
		\end{tabular}
		\caption{Possible statistics for an operational theory with 4 preparations and 3 binary measurements, satisfying the equivalence $\frac12P_\psi+\frac12P_{\bar{\psi}}=\frac12P_\phi+\frac12P_{\bar{\phi}}$. Note that we only show the $k=0$ outcome statistic for each measurement as the $k=1$ are then uniquely determined by normalization.}
		\label{Table1}
	\end{table}
	
	The demand that there is a noncontextual ontological model for this operational theory forces a relation upon these three parameters. The derivation of this relation is well explained in Ref.~\cite{DAVID1} and culminates in solving a rather extensive linear system, and is given by the inequality
	\begin{equation}\label{ineq}
		s\leq 1-\frac{c-\epsilon}{2}.
	\end{equation}
	If a data table parameterized as in Table \ref{Table1} satisfies this condition, then it can be explained by a generalized noncontextual ontological model, otherwise it cannot.
	
	In our quantum experiment, described in Fig.~1 of the main text, we find that the  inferential  equivalence \ref{equiv} is satisfied and that we can write the three parameters in the data table as functions of $\theta$ and $r$.
	To do so, first note that the depolarizing noise acts on the effects of the scenario such that
	\begin{equation}
		E\mapsto\mathcal{D}_r^{\text{depol}}(E):=(1-r)E+\frac{r}{2} \mathbb{1},
	\end{equation}
	where $\mathbb{1}$ is the $2\times2$ identity matrix. This allows us to compute the statistics of our experiment for depolarized measurements in Table \ref{Table2}. 
	\begin{table}[h]
		\centering
		\begin{tabular}{c|cccc}\hline
			$\text{Born rule}$ & $\ket{\phi}\!\!\bra{\phi}$ & $\ket{\psi}\!\!\bra{\psi}$ & $\ket{\bar{\phi}}\!\!\bra{\bar{\phi}}$ & $\ket{\bar{\psi}}\!\!\bra{\bar{\psi}}$\\\hline
			$\mathcal{D}_r^{\text{depol}}(\ket{\phi}\!\!\bra{\phi})$ & $1-\frac{r}{2}$ & $(1-r)\sin^2\theta+\frac{r}{2}$ & $\frac{r}{2}$ & $(1-r)\cos^2\theta+\frac{r}{2}$\\
			$\mathcal{D}_r^{\text{depol}}(\ket{\psi}\!\!\bra{\psi})$ & $(1-r)\sin^2\theta+\frac{r}{2}$ & $1-\frac{r}{2}$ & $(1-r)\cos^2\theta+\frac{r}{2}$ & $\frac{r}{2}$\\
			$\mathcal{D}_r^{\text{depol}}(E_{g_\phi})$ & $\frac{1-r}{2}(1+\cos\theta)+\frac{r}{2}$ & $\frac{1-r}{2}(1-\cos\theta)+\frac{r}{2}$ & $\frac{1-r}{2}(1-\cos\theta)+\frac{r}{2}$ & $\frac{1-r}{2}(1+\cos\theta)+\frac{r}{2}$\\\hline
		\end{tabular}
		\caption{Statistics of the prepare-and-measure scenario with projectors depolarized by a parameter $r$.}
		\label{Table2}
	\end{table}

	By comparing the two tables we can write $s$, $c$, and $\epsilon$ as functions of $\theta$ and $r$:
	\begin{eqnarray}
		s&=&\frac12+\frac{1-r}{2}\cos\theta;\\
		c&=&(1-r)\sin^2\theta+\frac{r}{2};\\
		\epsilon&=&\frac{r}{2}.
	\end{eqnarray}
	From this, we can compute the minimal amount of depolarizing noise $r^\text{depol}_{\text{min}}$, which is required so that the quantum experiment admits of a noncontextual ontological model. That is, the value of $r$ such that the inequality \ref{ineq} is satisfied tightly. The value obtained for $r^\text{depol}_{\text{min}}$ will depend on $\theta$ and so we obtain the equation
	\begin{equation}
		r^\text{depol}_\text{min}=1-\frac{1}{\sin^2\theta+\cos\theta},
	\end{equation}
	which is the equation that was shown and discussed in the main text. Notice that in Ref.~\cite{DAVID1} the depolarizing noise acts both on states and effects, so the statistics in Table \ref{Table2} in our case provide different values for the parameters $s$, $c$, and $\epsilon$ than in that work. However, the plots of $r^\text{depol}_\text{min}$ coincide here and in Ref.~\cite{DAVID1} (up to a reparameterization of $\theta$) due to a miscalculation in the latter, which re-scaled $r^\text{depol}_\text{min}$ to the case of depolarizing noise acting only on the effects. Because the two cases -- depolarizing noise acting only on effects or acting on both effects and measurements -- are equivalent up to this reparameterization, there is no impact on any of the analyzes in Ref.~\cite{DAVID1}.
	
	\subsection{Robustness to dephasing} 
	We can now repeat the same analysis from the previous section, but in this case considering the scenario described in Fig.~3 of the main text, where $r$ parameterizes the amount of dephasing rather than depolarizing noise. Recall that in this scenario the dephased effects are given by
	\begin{equation}
		E\mapsto\mathcal{D}_r^{\text{deph}}(E)\\:=(1-r)E+r\sum_{i\in\{0,1\}}\braket{i|E|i}|i\rangle\langle i|,
	\end{equation}
	where $\{\ket{i}\}_{i\in\{0,1\}}$ is the $Z$ basis. The statistics for this new scenario can then be computed and are given in Table \ref{Table3}.
	
	\begin{table}[h]
		\centering
		\begin{tabular}{c|cccc}\hline
			$\text{Born rule}$ & $\ket{\phi}\!\!\bra{\phi}$ & $\ket{\psi}\!\!\bra{\psi}$ & $\ket{\bar{\phi}}\!\!\bra{\bar{\phi}}$ & $\ket{\bar{\psi}}\!\!\bra{\bar{\psi}}$\\\hline
			$\mathcal{D}_r^{\text{deph}}(\ket{\phi}\!\!\bra{\phi})$ & $1-\frac{r}{2}$ & $(1-r)\sin^2\theta+\frac{r}{2}$ & $\frac{r}{2}$ & $(1-r)\cos^2\theta+\frac{r}{2}$\\
			$\mathcal{D}_r^{\text{deph}}(\ket{\psi}\!\!\bra{\psi})$ & $(1-r)\sin^2\theta+\frac{r}{2}$ & $1-\frac{r}{2}$ & $(1-r)\cos^2\theta+\frac{r}{2}$ & $\frac{r}{2}$\\
			$\mathcal{D}_r^{\text{deph}}(E_{g_\phi})$ & $\frac12(1+\cos\theta)$ & $\frac12(1-\cos\theta)$ & $\frac12(1-\cos\theta)$ & $\frac12(1+\cos\theta)$\\\hline
		\end{tabular}
		\caption{Statistics of the prepare-and-measure scenario, now with projectors dephased by a parameter $r$. Notice that the first two rows are the same as in the depolarized case (Table \ref{Table2}). However, the entries in the last row do not depend on $r$ since the dephasing noise does not change the effects aligned with the $Z$ axis. Hence, the third row here is the same as the third row of Table \ref{Table2} with $r$ set to zero.}
		\label{Table3}
	\end{table}
	
	Like in the previous section, we can then compare this to Table \ref{Table1} in order to write the parameters $s$, $c$, and $\epsilon$ as functions of $r$ and $\theta$:
	\begin{eqnarray}
		s&=&\frac12(1+\cos\theta);\\
		c&=&(1-r)\sin^2\theta+\frac{r}{2};\\
		\epsilon&=&\frac{r}{2}.
	\end{eqnarray}
	Note that $c$ and $\epsilon$ are the same as in the depolarizing case, but that $s$ is now independent of $r$.
	
	Finally, we can compute the maximal robustness to dephasing noise by demanding that inequality \ref{ineq} is saturated, that is, imposing the minimum dephasing noise $r^\text{deph}_\text{min}$ necessary for the existence of a noncontextual ontological model. This leads to the following equation for $r^\text{deph}_{\text{min}}$,
	\begin{equation}
		r^\text{deph}_\text{min}=1-\frac{1-\cos\theta}{\sin^2\theta},
	\end{equation}
	which is exactly what we gave and discussed in  the main text.

	\section{Failures of noncontextuality cannot be achieved without set coherence}\label{app2}
	
	In this section, we  give formal proof that one cannot prove the failure of the noncontextuality in scenarios where all the states (or all the effects) have no set coherence---that is, are simultaneously diagonalizable~\cite{SET}. This means that computing robustness with respect to dephasing noise is a sensible measure of the failure of the existence of a noncontextual ontological model, as under sufficient dephasing noise \emph{any} scenario will admit of a noncontextual ontological model. This result is well known in the community but we are not aware of an explicit proof so we include it here for convenience. \blk
	
	\begin{proposition}
		Incoherent quantum states or measurements cannot prove the failure of noncontextuality.
	\end{proposition}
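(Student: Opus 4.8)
The plan is to show that if all states (respectively all effects) in a quantum prepare-and-measure scenario are simultaneously diagonalizable, then the induced operational theory admits a noncontextual ontological model; by the characterization of Ref.~\cite{SIMPLEX}, it suffices to exhibit a simplex embedding of the associated GPT. I would treat the two cases (incoherent states, incoherent effects) symmetrically, since they are dual, and spell out one in detail.

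First I would fix the common eigenbasis $\{\ket{i}\}_{i=1}^d$ in which every state $\rho_P$ is diagonal, so $\rho_P = \sum_i \mu_P(i)\ket{i}\!\bra{i}$ with $\mu_P$ a probability distribution on the index set $\{1,\dots,d\}$. I would then take the ontic space to be $\Lambda = \{1,\dots,d\}$, assign to each preparation $P$ the epistemic state $\mu_P$, and assign to each measurement-outcome pair $[k|M]$ with POVM element $E_{k|M}$ the response function $\xi_{k|M}(i) := \braket{i|E_{k|M}|i}$. Since $0 \le E_{k|M} \le \mathbb{1}$, each $\xi_{k|M}(i) \in [0,1]$, and $\sum_k \xi_{k|M}(i) = \braket{i|\sum_k E_{k|M}|i} = 1$, so these are genuine response functions. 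The Born rule then gives $\mathrm{Tr}(E_{k|M}\rho_P) = \sum_i \mu_P(i)\braket{i|E_{k|M}|i} = \sum_i \xi_{k|M}(i)\mu_P(i)$, reproducing all operational statistics.

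The remaining point — and the only place where any care is needed — is to verify that this model is \emph{noncontextual}, i.e.\ that operationally equivalent preparations are sent to the same epistemic state and operationally equivalent effects to the same response function. For effects this is immediate: if $E_{k|M} = E_{k'|M'}$ as operators (which is what operational equivalence of effects forces, since all diagonal states already suffice to separate the full operator space restricted to the diagonal, and off-diagonal parts are never probed) then $\xi_{k|M} = \xi_{k'|M'}$ by construction. For preparations, operational equivalence $P \simeq P'$ means $\mathrm{Tr}(E\rho_P) = \mathrm{Tr}(E\rho_{P'})$ for all effects $E$ in the scenario; since all effects are diagonal in the $\ket{i}$ basis (this is the hypothesis in the ``incoherent effects'' case) — or, in the ``incoherent states'' case, since $\rho_P,\rho_{P'}$ are themselves both diagonal and the diagonal effects $\ket{i}\!\bra{i}$ are expressible as coarse-grainings available in the GPT closure — the equality of statistics forces $\mu_P = \mu_{P'}$. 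Packaging this, the map sending GPT states to their diagonal vectors $\mu_P$ and GPT effects to $\xi$ realizes a simplex embedding into the $d$-simplex and its dual, so no proof of contextuality is possible. The main obstacle is being careful about which object is assumed incoherent and correspondingly invoking the right separating set (diagonal states vs.\ diagonal effects) when arguing that operational equivalence descends to equality of the ontological representatives; once that bookkeeping is pinned down, the construction is routine.
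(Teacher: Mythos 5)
Your construction is essentially identical to the paper's own proof: the ontic space is the index set of the common eigenbasis, epistemic states are the diagonal entries $\braket{i|\rho_P|i}$, response functions are $\braket{i|E_{k|M}|i}$, the Born rule is recovered via diagonality, and noncontextuality follows because equal operators yield equal representations. The paper handles the final noncontextuality check more briefly (operational equivalence in a quantum scenario is taken to mean equality of density operators or POVM elements, so the check is immediate), but this is the same argument in substance.
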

	\begin{proof}
		We here give the proof for the case of incoherent states, the case of incoherent measurements following straightforwardly. Consider a quantum system $\mathcal{H}$, a set of quantum states $\Omega:=\{\rho_P\}_{P\in\mathcal{P}}$ for this system, and quantum effects $\mathcal{E}:=\{E_{k|M}\}_{[k|M]\in K\times\mathcal{M}}$ acting on the system, such that $\sum_{k\in K}E_{k|M}=\mathbb{1}$, $\forall M\in\mathcal{M}$. Let $\{\ket{i}\}_{i\in I}$ be the basis in which all $\rho_P$ are diagonalised, $I=\{0,1,...,\text{dim}\mathcal{H}\}$. We define a linear map $\mu:\Omega\to D[I]::\rho_P\mapsto \mu_P$ where $D[I]\subset \mathbb{R}^I$ is the space of probability distributions over the index set $I$, where the $\mu_P$ are defined pointwise by
		\begin{equation}
			\mu_P(i):=\Tr\{\ket{i}\!\!\bra{i}\rho_P\}=\braket{i|\rho_P|i},\quad\forall P\in\mathcal{P},\ \forall i\in I.
		\end{equation}
		These are indeed valid probability distributions as it is easy to show that $\sum_{i\in I}\mu_P(i)=1$, $\forall P\in\mathcal{P}$.  We also define a linear map $\xi:\mathcal{E}\to R[I]::E_{k|M} \mapsto \xi_{k|M}$ where $R[I]\subset \mathbb{R}^I$ is the space of response functions over the index set $I$, where the $\xi_{k|M}$ are defined pointwise by \blk
		\begin{equation}
			\xi_{k|M}(i):=Tr\{E_{k|M}\ket{i}\!\!\bra{i}\}=\braket{i|E_{k|M}|i},\quad\forall[k|M]\in K\times\mathcal{M},\ \forall i \in I,
		\end{equation}
		such that $\xi_{k|M}(i) \in [0,1]$ $\forall i\in I$ and $\sum_{k\in K}\xi_{k|M}(i)=1$ $\forall M\in\mathcal{M}, i\in I$, hence these constitute valid response functions.
		
		Notice now that the quantum statistics in the experiment are reproduced by these probability distributions and response functions, since
		\begin{eqnarray}
			\Tr\{E_{k|M}\rho_P\} &=& \sum_{i\in I}\braket{i|E_{k|M}\rho_P|i}\\
			&=&\sum_{i,j\in I}\braket{i|E_{k|M}|j}\braket{j|\rho_P|i}\\
			&=& \sum_{i,j\in I}\braket{i|E_{k|M}|j}\braket{i|\rho_P|i}\delta_{ij}\\
			&=&\sum_{i\in I}\braket{i|E_{k|M}|i}\braket{i|\rho_P|i}\\
			&=&\sum_{i\in I}\xi_{k|M}(i)\mu_P(i),
		\end{eqnarray}
		where for the third equality we used the fact that for all $P\in\mathcal{P}$, $\rho_P$ is diagonal in the basis $\{\ket{i}\}_{i\in I}$. If we instead were working with incoherent measurements in this step, we would have instead used that $\braket{i|E_{k|M}|j}=\delta_{ij}\braket{i|E_{k|M}|i}$ in order to obtain the same result.
		
		Finally, notice that whenever $\rho_P=\rho_{P'}$ (resp. $E_{k|M}=E_{k'|M'}$), it will be the case that $\mu_P(i)=\mu_{P'}(i)$ (resp. $\xi_{k|M}(i)=\xi_{k'|M'}(i)$), $\forall i\in I$, therefore constituting a noncontextual ontological model for the statistics of the scenario. \blk
	\end{proof}
	
	\section{Robustness to dephasing with the linear program}\label{app3} 
	\subsection{Formal definitions}
	
	We begin this section by giving a formal definition of a GPT description of a given operational prepare-measure scenario  \cite{SIMPLEX,GPT2}:
	\begin{definition} A \emph{generalized probabilistic theory} associated with the operational scenario $(\mathcal{P},\mathcal{M},K,p)$ is a tuple $(V,\braket{\cdot,\cdot},\Omega,\mathcal{E})$ such that
		\begin{compactitem}
			\item $(V,\braket{\cdot,\cdot})$ is a finite-dimensional, real vector space equipped with an inner product;
			\item $\Omega\subset V$ is a compact, convex set such that $V=\mathsf{LinSpan}[\Omega]$ and $0 \not\in \mathsf{AffSpan}[\Omega]$, and where any element $s\in\Omega$, called a \emph{state}, is associated with an inferential equivalence class of preparations, $\widetilde{P}\in\mathcal{P}/\simeq$; 
			\item $\mathcal{E}$ is a subset of the dual $\Omega^*$, such that both the origin $0$ and the unit $u$ (i.e., the unique vector satisfying $\braket{u,s}=1$ for all $s\in \Omega$) in $\Omega^*$ are in $\mathcal{E}$, where any element $\varepsilon\in\mathcal{E}$ is called an \emph{effect} and is associated with an inferential equivalence class of measurement outcomes, $\widetilde{[k|M]}\in\mathcal{M}/\simeq$;
			\item For all $[k|M]\in\mathcal{M}$ and $P\in\mathcal{P}$, there is a respective $\varepsilon\in\mathcal{E}$ and $s\in\Omega$ such that
			\begin{equation}
				p(k|M,P)=\braket{\varepsilon,s}.
			\end{equation}
		\end{compactitem}
	\end{definition}
	A GPT is therefore a geometrical description of the  operational scenario in which we have quotiented the sets of preparations and measurement outcomes by the inferential equivalence relation, since variations within the equivalence classes (i.e., the context of the procedure) are irrelevant for making predictions. In particular, this means that states and effects within the GPT satisfy the principle of \emph{tomography}, i.e.,
	\begin{equation}
		\braket{\varepsilon,s_1}=\braket{\varepsilon,s_2},\,\forall \varepsilon\in\mathcal{E}\iff s_1=s_2;\quad\braket{\varepsilon_1,s}=\braket{\varepsilon_2,s},\,\forall s\in\Omega\iff \varepsilon_1=\varepsilon_2.
	\end{equation}
	
	The notion of nonclassicality employed in this work is the existence of a noncontextual ontological model of the operational scenario, which was shown in Ref.~\cite{SIMPLEX} to be equivalent to the simplex-embeddability of the associated GPT. The latter is defined as follows:
	\begin{definition} A GPT $(V,\braket{\cdot,\cdot},\Omega,\mathcal{E})$ is \emph{simplex-embeddable}  if and only if \\
		(i) there exists $n\in \mathbb{N}$ defining
		\begin{compactitem}
			\item the real vector space $\mathbb{R}^n$ with Euclidean inner product $\_\cdot\_ $, and
			\item the unit simplex $\Delta_n\in \mathbb{R}^n$ and its dual, the unit hypercube $\Delta^*_n$, 
		\end{compactitem}
		(ii) there exists a pair of linear maps $\iota,\kappa:V\to \mathbb{R}^n$ such that
		\begin{equation}
			\iota(\Omega)\subseteq\Delta_n;\quad\kappa(\mathcal{E})\subseteq\Delta^*_n,
		\end{equation}
		and (iii) the probabilistic predictions are preserved, i.e.,
		\begin{equation}
			\braket{\varepsilon,s}=\kappa(\varepsilon)\cdot\iota(s),\quad\forall s\in\Omega,\varepsilon\in\mathcal{E}.
		\end{equation}
	\end{definition}
	The simplex $\Delta_n$ can be thought of as the space of probability distributions over a finite set of ontic states, and the hypercube $\Delta_n^*$ as the response functions over the finite set. This can therefore be thought of as a geometric representation of the ontological theory in which we wish to represent the GPT. This ontological theory is formally equivalent to the GPT representation of classical probability theory \cite{GPT2}.\\
	
	Undeniably, it is not always the case that an experiment has access to \emph{all} the states or effects in a GPT. In fact, in many cases the states and effects associated to an experiment will not even satisfy tomography. Moreover, if we consider nondeterministic sources as a way to prepare states in the experiment, then it can also be the case that  subnormalized states can be prepared whilst their normalized counterparts cannot. An \emph{accessible GPT fragment} of a GPT $(V,\braket{\cdot,\cdot},\Omega,\mathcal{E})$ was defined in Ref.~\cite{FRAGMENTS} to provide a description for such experiments. Formally it is a tuple $(I_{\Omega^F},I_{\mathcal{E}^F},\Omega^F,\mathcal{E}^F)$ such that $I_{\Omega^F}(\Omega^F)\subseteq\Omega$ and $I_{\mathcal{E}^F}^T(\mathcal{E}^F)\subseteq\mathcal{E}$, and where $I_{\Omega^F}$ and $I_{\mathcal{E}^F}$ are called \emph{inclusion maps}~\cite{FRAGMENTS}. Notice that there is no actual need to know the full GPT in order to define an accessible fragment -- the elements in $\Omega^F$ and $\mathcal{E}^F$ can be written in terms of the subspaces they span (which might not be dual to each other) rather than with respect to the full vector space $V$. Due to this possible mismatch between the spanned spaces, the inclusion processes $I_{\Omega^F}$ and $I_{\mathcal{E}^F}$ are needed to provide the prediction rule, that is,
	\begin{equation}
		p(\epsilon,s):=\braket{I_{\mathcal{E}^F}(\epsilon),I_{\Omega^F}(s)}, \quad \forall \epsilon\in\mathcal{E}_F,s\in\Omega_F.
	\end{equation}
	
	The notion of a simplex embedding can be straightforwardly imported to the accessible GPT fragment, and the failure of simplex embeddability for a fragment immediately implies the nonexistence of an embedding for the full GPT~\cite{CODE,FRAGMENTS}.  Importantly for us, it has been shown in Ref.~\cite{CODE} that one can test for simplex embeddability using a linear program. Moreover, one can also use this linear program to compute how robust a given scenario is to depolarizing noise. In the following, we briefly introduce this linear program and show how it can be easily adapted to also compute robustness to dephasing noise.
	\subsection{Modification of the linear program}
	We begin this section by summarising how the linear program from Ref.~\cite{CODE} works. Consider an  accessible GPT fragment $(I_{\Omega^F},I_{\mathcal{E}^F},\Omega^F,\mathcal{E}^F)$. The linear program takes as inputs $\Omega^F$ and $\mathcal{E}^F$, and first characterizes the facet inequalities of the positive cones of states/effects. Suppose that there are $n_{\Omega}\in \mathbb{N}$ of these for states and $n_{\mathcal{E}} \in \mathbb{N}$ of these for effects. The linear program then turns these collections of facet inequalities into the matrices $H_{\Omega}:\mathsf{LinSpan}[\Omega^F]\to \mathbb{R}^{n_\Omega}$ and $H_\mathcal{E}:\mathsf{LinSpan}[\mathcal{E}^F]\to \mathbb{R}^{n_\mathcal{E}}$ such that 
	\begin{equation}
		H_\Omega\cdot  s \geq_e0 \iff s=\sum_iq_is_i,\quad s_i\in\Omega^F,q_i\in\mathbb{R}^+,\,\forall i;
	\end{equation}
	\begin{equation}
		H_\mathcal{E}\cdot \varepsilon \geq_e0 \iff \varepsilon=\sum_iq_i\varepsilon_i,\quad \varepsilon_i\in\mathcal{E}^F,q_i\in\mathbb{R}^+,\,\forall i,
	\end{equation}
	where $\geq_e$ is entry-wise non-negativity. The code also characterizes the inclusion map $I_{\Omega^F}:\mathsf{LinSpan}[\Omega^F]\to V$ (and $I_{\mathcal{E}^F}:\mathsf{LinSpan}[\mathcal{E}^F]\to V$) which maps each state (effect) from the accessible GPT fragment to the smallest Euclidean vector space $V$ such that $\mathsf{LinSpan}[\Omega^F]\subseteq V$, $\mathsf{LinSpan}[\mathcal{E}^F]\subseteq V$ and with the dot product reproducing the probability rule. The code also takes as input a maximally mixed state $s_\mathcal{D}$, and characterizes the maximally depolarizing noise $\mathcal{D}$ from it.  Finally, it solves the following linear program:
	\begin{eqnarray}
		\text{min} & \quad & r \nonumber\\
		\text{s.t.} & \quad & rI^T_\mathcal{E}\cdot\mathcal{D}\cdot I_\Omega+(1-r)I_\mathcal{E}^T\cdot I_\Omega=H_\mathcal{E}^T\cdot\sigma\cdot H_\Omega,\\
		& & \sigma \geq_e 0\,. \nonumber
	\end{eqnarray}
	
	Now that we have summarised the main relevant aspect of the linear program of Ref.~\cite{CODE}, we can introduce the particular accessible GPT fragment employed in our work. A pure qubit state $\ket{\psi}$ rotated from state $\ket{0}$ by an arbitrary angle $\theta$ about the $Y$-axis can be represented in terms of an orthonormal operator basis as
	\begin{equation}
		\ket{\psi}\!\!\bra{\psi}=\frac12(\mathbb{1}+\sin\theta \hat{X}+\cos\theta \hat{Z}),
	\end{equation}
	where $\hat{X}$ (resp.~$\hat{Z}$) denotes the Pauli-X operator (resp.~Pauli-Z). We are assuming with no loss of generality that the plane in which our preparations and measurements live in the Bloch sphere is the $ZX$ plane. Since Hermitian operators in this plane can be represented by a real-valued, three-dimensional vector, our states and effects will have the following form:
	\begin{equation}
		\boldsymbol{\psi}:=\frac{1}{\sqrt2}\left(\begin{array}{c}
			1\\
			\sin\theta\\
			\cos\theta
		\end{array}\right);\quad
		\boldsymbol{\bar{\psi}}:=\frac{1}{\sqrt2}\left(\begin{array}{c}
			1\\
			-\sin\theta\\
			-\cos\theta
		\end{array}\right);\quad
		\boldsymbol{\phi}:=\frac{1}{\sqrt2}\left(\begin{array}{c}
			1\\
			\sin\theta\\
			-\cos\theta
		\end{array}\right);
	\end{equation}
	\begin{equation}
		\boldsymbol{\bar{\phi}}:=\frac{1}{\sqrt2}\left(\begin{array}{c}
			1\\
			-\sin\theta\\
			\cos\theta
		\end{array}\right);\quad
		\boldsymbol{E_{g_\psi}}:=\frac{1}{\sqrt2}\left(\begin{array}{c}
			1\\
			0\\
			1
		\end{array}\right);\quad
		\boldsymbol{E_{g_\psi}}:=\frac{1}{\sqrt2}\left(\begin{array}{c}
			1\\
			0\\
			-1
		\end{array}\right).
	\end{equation}
	One can also define a null vector and a unit vector,
	\begin{equation}
		\boldsymbol{0}:=\left(\begin{array}{c}
			0\\
			0\\
			0
		\end{array}\right);\quad	\boldsymbol{u}:=\left(\begin{array}{c}
			\sqrt2\\
			0\\
			0
		\end{array}\right), 
	\end{equation}
	and probabilities are given by taking the inner product between a preparation vector and an effect vector. This representation recovers all the expected statistics for this scenario. If we then define
	\begin{equation}
		\Omega^F:=\mathsf{Conv}\{\boldsymbol{\psi},\boldsymbol{\bar{\psi}},\boldsymbol{\phi},\boldsymbol{\bar{\phi}}\},
	\end{equation}
	\begin{equation}
		\mathcal{E}^F:=\mathsf{Conv}\{\boldsymbol{\psi},\boldsymbol{\bar{\psi}},\boldsymbol{\phi},\boldsymbol{\bar{\phi}},\boldsymbol{E_{g_\psi}},\boldsymbol{E_{g_\phi}},\boldsymbol{0},\boldsymbol{u}\},
	\end{equation}
	i.e., the convex hulls of the corresponding sets of vectors, then  $\mathcal{F}:=(\mathbb{1},\mathbb{1},\Omega^F,\mathcal{E}^F)$ is the accessible GPT 
	fragment associated with the prepare-and-measure scenario studied in this work. In this scenario, we can take the inclusion maps to be identities because the states and effects of the scenario are mutually tomographic.
	
	In quantum theory, a state $\rho$ will dephase in the Bloch sphere when $Z$ is the chosen basis, according to the dephasing channel $\mathcal{D}_Z$ defined by
	\begin{equation}
		\mathcal{D}_Z[\hat{\rho}]:=\sum_{i\in\{0,1\}}\frac12(\mathbb{1}+(-1)^i\hat{Z})\hat{\rho}\frac12(\mathbb{1}+(-1)^i\hat{Z})=\frac12(\mathbb{1}+\braket{Z}\hat{Z}).
	\end{equation}
	
	In the representation of the scenario as an accessible GPT fragment $\mathcal{F}$, this dephasing channel is represented by the linear map $\mathcal{D}_Z:\mathbb{R}^3\to\mathbb{R}^3$ given by 
	\begin{equation}
		\mathcal{D}_Z\circ\frac{1}{\sqrt2}\left(\begin{array}{c}
			1\\
			\braket{X}\\
			\braket{Z}
		\end{array}
		\right)=\frac{1}{\sqrt2}\left(\begin{array}{c}
			1\\
			0\\
			\braket{Z}
		\end{array}\right).
	\end{equation}
	
	More generally, for a general direction $\eta$ in the $ZX$ plane, we can define a dephasing map $\mathcal{D}_\eta$ in this representation. To start, define the projectors
	\begin{equation}
		\ket{+_\eta}\!\!\bra{+_\eta}=\frac12(\mathbb{1}+\cos\eta \hat{X}+\sin\eta \hat{Z}),\quad\ket{-_\eta}\!\!\bra{-_\eta}=\frac12(\mathbb{1}-\cos\eta \hat{X}-\sin\eta \hat{Z}),
	\end{equation}
	from which follows that
	\begin{equation}
		\mathcal{D}_\eta[\hat{\rho}]=\frac12(\mathbb{1}+(\braket{X}\cos^2\eta+\braket{Z}\cos\eta\sin\eta)\hat{X}+(\braket{X}\cos\eta\sin\eta+\braket{Z}\sin^2\eta)\hat{Z}).
	\end{equation}
	For the accessible GPT fragment $\mathcal{F}$, this action of the dephasing map is hence represented by
	\begin{equation}
		\mathcal{D}_\eta\circ\frac{1}{\sqrt2}\left(\begin{array}{c}
			1\\
			\braket{X}\\
			\braket{Z}
		\end{array}
		\right)=\frac{1}{\sqrt2}\left(\begin{array}{c}
			1\\
			\braket{X}\cos^2\eta+\braket{Z}\sin\eta\cos\eta\\
			\braket{X}\sin\eta\cos\eta+\braket{Z}\sin^2\eta
		\end{array}
		\right).
	\end{equation}
	This means that our GPT dephasing map in a generalized $\eta$ basis (in the $ZX$ plane) corresponds to the matrix
	\begin{equation}\label{dephasing}
		\mathcal{D}_\eta=
		\begin{pmatrix}
			1 & 0 & 0\\
			0 & \cos^2\eta & \cos\eta\sin\eta\\
			0 & \cos\eta\sin\eta & \sin^2\eta
		\end{pmatrix}.
	\end{equation}

	\begin{figure}[t]
		\centering
		\begin{tabular}{cc}
			\includegraphics[scale=0.5]{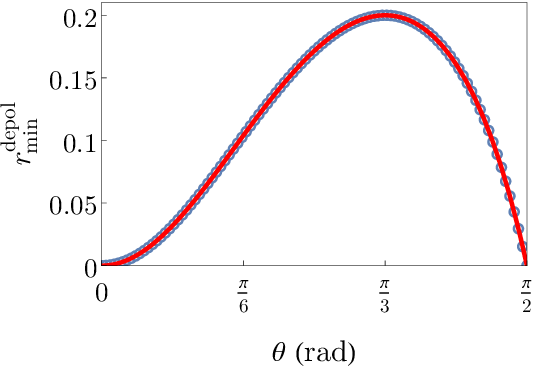}
			&
			\includegraphics[scale=0.5]{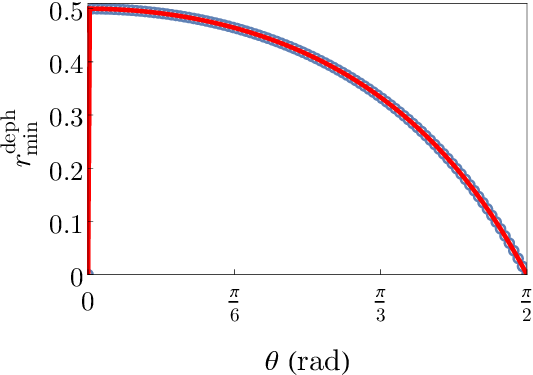}
		\end{tabular}
		\caption{To verify whether the original linear program (left) and its modification (right) survived the alterations, they were fed with the states and measurements of the original scenario (blue dots). Both obtained plots are compatible with the respective analytical plots (red curves).}
		\label{Fig7}
	\end{figure}
	
	The linear program from Ref.~\cite{CODE} takes as an input a set of states, a set of effects, and the unit vector from the accessible GPT fragment to be embedded. We modify the code to ask for an additional parameter $\eta$, and replace the occurrences of the depolarizing map with the matrix in Equation \ref{dephasing}. This modification makes sense for the particular scenario that we are interested in,  but adapting the program for analyzing robustness to dephasing for general quantum scenarios and for other GPTs is beyond the scope of this work. Both the original linear program and the modified version also were altered to include the robustness as the first element of their string of outputs ( at first, Ref.~\cite{CODE} would output only the list of epistemic states and response functions of the obtained noncontextual model), in order to make the plots easier. As a verification of whether the codes were functioning as expected, we demonstrated that both the original linear program and its modification managed to recover the analytical results from the main text, as shown in Figure ~\ref{Fig7}.
	\section{MESD with dephasing noise and noisy and rotated discriminating measurement}\label{app4}
	
	Here we introduce the plots obtained via applying the modified linear program introduced in Sec.~\ref{app3} to various scenarios (including different amounts and types of noise).
	
	The first case we study is one where we rotate measurement $M_g$ by a parameter $\alpha$ with respect to the $Z$ axis, in order to investigate how the robustness to dephasing behaves in these related scenarios with a more coherent $M_g$.  The plot is displayed in Fig. \ref{Fig8}. As expected, the plot interpolates between the case in which $M_g$ lies aligned to the $Z$ axis (Fig. 3 of the main text) and the $X$ axis (Fig. 4 of the main text). The only circumstances in which robustness is null are when the measurement $M_g$ coincides with one of the other measurements, i.e., when $\alpha=\theta$. Furthermore, the plot shows clearly that $\alpha=\frac{\pi}{2}$ is the only circumstance in which the robustness saturates to $1$.
	\begin{figure}[h]
		\centering
		\begin{tabular}{cc}
			\adjustbox{scale=0.8,valign=m}{\begin{tikzpicture}
	\begin{pgfonlayer}{nodelayer}
		\node [style=hadamard] (0) at (4.6, 2.25) {};
		\node [style=hadamard] (1) at (3.25, 1.63) {};
		\node [style=none] (2) at (5.65, 2.7125) {$E_{g_\psi}$};
		\node [style=none] (3) at (1.95, 1.1125) {$E_{g_\phi}$};
		\node [style=none] (4) at (3.925, 3.5) {};
		\node [style=none] (5) at (2.425, 2) {};
		\node [style=none] (6) at (3.925, 0.5) {};
		\node [style=none] (7) at (5.425, 2) {};
		\node [style=hadamard] (8) at (4.45, 2.975) {};
		\node [style=hadamard] (9) at (3.4, 0.955) {};
		\node [style=none] (10) at (5.625, 3.575) {$\ket{\psi}\bra{\psi}$};
		\node [style=hadamard] (11) at (3.4, 2.975) {};
		\node [style=hadamard] (12) at (4.45, 0.955) {};
		\node [style=none] (13) at (2.325, 3.575) {$\ket{\bar{\phi}}\bra{\bar{\phi}}$};
		\node [style=none] (14) at (2.25, 0.375) {$\ket{\bar{\psi}}\bra{\bar{\psi}}$};
		\node [style=none] (15) at (5.55, 0.375) {$\ket{\phi}\bra{\phi}$};
		\node [style=none] (16) at (2.525, 1.9625) {};
		\node [style=none] (17) at (3.15, 1.9625) {};
		\node [style=none] (18) at (2.84, 1.762) {$r$};
		\node [style=none] (19) at (5.025, 2.975) {};
		\node [style=none] (20) at (2.835, 0.955) {};
		\node [style=none] (21) at (5.025, 0.955) {};
		\node [style=none] (22) at (2.835, 2.975) {};
		\node [style=none] (23) at (3.925, 2.4625) {};
		\node [style=none] (24) at (4.265, 2.2685) {};
		\node [style=none] (25) at (4.05, 2.7) {$\theta$};
		\node [style=none] (26) at (3.915, 1.531) {};
		\node [style=none] (27) at (3.575, 1.75) {};
		\node [style=none] (28) at (3.79, 1.2935) {$\alpha$};
	\end{pgfonlayer}
	\begin{pgfonlayer}{edgelayer}
		\draw (1) to (0);
		\draw [dashed] (4.center) to (6.center);
		\draw [bend left=45] (5.center) to (4.center);
		\draw [bend left=45] (4.center) to (7.center);
		\draw [bend left=45] (7.center) to (6.center);
		\draw [bend left=45] (6.center) to (5.center);
		\draw (9) to (8);
		\draw (12) to (11);
		\draw [bend left] (4.center) to (8);
		\draw [bend left=15] (8) to (12);
		\draw [bend left] (12) to (6.center);
		\draw [bend left] (6.center) to (9);
		\draw [bend left=15] (9) to (11);
		\draw [bend left] (11) to (4.center);
		\draw [{|-|}] (17.center) to (16.center);
		\draw [dashed] (19.center) to (20.center);
		\draw [dashed] (21.center) to (22.center);
		\draw [bend left] (23.center) to (24.center);
		\draw [bend left] (26.center) to (27.center);
	\end{pgfonlayer}

 \begin{pgfonlayer}{background}
		\fill[gray!20] (4) to[bend right=45] (5) to[bend right=45] (6) to[bend right=45] (7) to[bend right=45] (4);
		\fill [white] (6.center) to[bend left=40] (9.center) to[bend left=15] (11.center) to[bend left=40] (4.center) to[bend left=40]  (8.center) to[bend left=15] (12.center) to[bend left=40] (6.center);
	\end{pgfonlayer}
\end{tikzpicture}}
			&
			\adjustbox{scale=0.7,valign=m}{\includegraphics[scale=1.2]{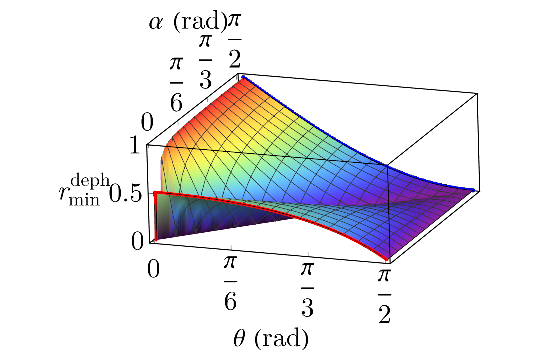}}
		\end{tabular}
		\caption{Left: representation of the action of dephasing noise on the plane in which measurements live, now with measurement $M_g$ rotated by an angle $\alpha$; Right: contextual robustness to dephasing as a function of both the angle $\theta$ between the prepared states and the $Z$ axis the angle $\alpha$ by which $M_g$ is rotated. The red curve represents the plot from Fig. 3 of the main text and the blue curve represents the plot from Fig.~4.}
		\label{Fig8}
	\end{figure}
	
	The second case of study is the scenario where the measurement $M_g$ has been affected by some amount $p$ of depolarizing noise prior to the assessment of the robustness of the scenario to dephasing (i.e., before computing $r_{\text{min}}$).  The scheme for this scenario is given in Figure~\ref{Fig9}, along with a plot for the robustness to dephasing as a function of both $\theta$ and $p$ (the impurity of the discriminating measurements).
	\begin{figure}[h!]
		\centering       
		\begin{tabular}{cc}
			\adjustbox{valign=m,scale=0.8}{\begin{tikzpicture}
	\begin{pgfonlayer}{nodelayer}
		\node [style=hadamard] (2) at (3.925, 3.3) {};
		\node [style=hadamard] (3) at (3.925, 0.7) {};
		\node [style=none] (4) at (3.925, 3.925) {$E_{g_\psi}$};
		\node [style=none] (5) at (3.925, 0) {$E_{g_\phi}$};
		\node [style=none] (6) at (3.925, 3.5) {};
		\node [style=none] (7) at (2.425, 2) {};
		\node [style=none] (8) at (3.925, 0.5) {};
		\node [style=none] (9) at (5.425, 2) {};
		\node [style=hadamard] (10) at (4.45, 2.975) {};
		\node [style=hadamard] (11) at (3.4, 0.955) {};
		\node [style=none] (12) at (5.625, 3.275) {$\ket{\psi}\bra{\psi}$};
		\node [style=hadamard] (13) at (3.4, 2.975) {};
		\node [style=hadamard] (14) at (4.45, 0.955) {};
		\node [style=none] (15) at (2.325, 3.275) {$\ket{\bar{\phi}}\bra{\bar{\phi}}$};
		\node [style=none] (16) at (2.25, 0.675) {$\ket{\bar{\psi}}\bra{\bar{\psi}}$};
		\node [style=none] (17) at (5.55, 0.675) {$\ket{\phi}\bra{\phi}$};
		\node [style=none] (18) at (2.525, 1.9625) {};
		\node [style=none] (19) at (3.175, 1.9625) {};
		\node [style=none] (20) at (2.84, 1.762) {$r$};
		\node [style=none] (21) at (5.025, 2.975) {};
		\node [style=none] (22) at (2.835, 0.955) {};
		\node [style=none] (23) at (5.025, 0.955) {};
		\node [style=none] (24) at (2.835, 2.975) {};
		\node [style=none] (25) at (3.925, 2.4625) {};
		\node [style=none] (26) at (4.265, 2.2685) {};
		\node [style=none] (27) at (4.05, 2.7) {$\theta$};
		\node [style=none] (28) at (3.925, 3.475) {};
		\node [style=none] (29) at (3.925, 3.375) {};
		\node [style=none] (30) at (4.975, 3.425) {$p$};
		\node [style=none] (31) at (4.75, 3.475) {};
		\node [style=none] (32) at (4.75, 3.375) {};
	\end{pgfonlayer}
	\begin{pgfonlayer}{edgelayer}
		\draw (3) to (2);
		\draw [bend left=45] (7.center) to (6.center);
		\draw [bend left=45] (6.center) to (9.center);
		\draw [bend left=45] (9.center) to (8.center);
		\draw [bend left=45] (8.center) to (7.center);
		\draw (11) to (10);
		\draw (14) to (13);
		\draw [bend left=40] (6) to (10);
		\draw [bend left=15] (10) to (14);
		\draw [bend left=40] (14) to (8);
		\draw [bend left=40] (8) to (11);
		\draw [bend left=15] (11) to (13);
		\draw [bend left=40] (13) to (6);
		\draw [{|-|}] (19.center) to (18.center);
		\draw [dashed] (21.center) to (22.center);
		\draw [dashed] (23.center) to (24.center);
		\draw [bend left] (25.center) to (26.center);
		\draw [dashed] (31) to (28.center);
		\draw [dashed] (29.center) to (32);
		\draw [{|-|}] (32.center) to (31.center);
	\end{pgfonlayer}

	\begin{pgfonlayer}{background}
		\fill[gray!20] (6) to[bend right=45] (7) to[bend right=45] (8) to[bend right=45] (9) to[bend right=45] (6);
		\fill [white] (6.center) to[bend left=40] (10.center) to[bend left=15] (14.center) to[bend left=40] (8.center) to[bend left=40]  (11.center) to[bend left=15] (13.center) to[bend left=40] (6.center);
	\end{pgfonlayer}
\end{tikzpicture}} & \hspace{30pt}\adjustbox{valign=m}{\includegraphics[scale=0.65]{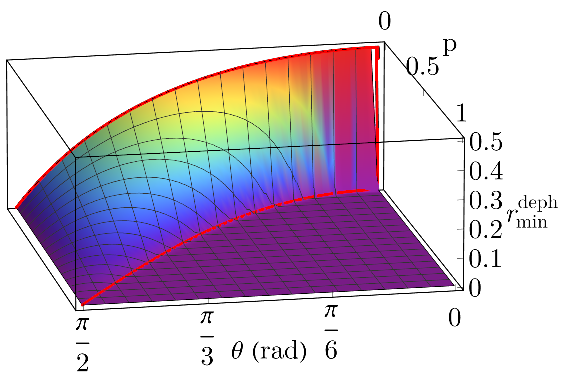}} 
		\end{tabular}
		\caption{Left: representation of the action of dephasing noise on the plane in which measurements live, now with discriminating measurement undergoing depolarizing noise by a factor $p$; Right: contextual robustness to dephasing as a function of both the angle between the prepared states and the $Z$ axis and the noise added to the discriminating measurement. Red curves represent the plot from Fig.~3 of the main text and the equality from Eq.~\ref{estimation}.}
		\label{Fig9}
	\end{figure}
	
	Notice that the section of the plot where $p=0$ corresponds to the plot from Figure~3 of the main text, that is, the dephased scenario with measurement $M_g$ aligned with the $Z$ axis. There is a clear relation between $p$ and $\theta$ from which no contextuality can be proven, and numerically it coincides with the equation
	\begin{equation}\label{estimation}
		p= 1-\cos\theta.
	\end{equation}
	Eq.~\eqref{estimation} hence provides the maximum noise one can add to the measurement $M_g$ so there is still a proof of contextuality when the other measurements undergo dephasing noise. Geometrically, $p\geq1-\cos\theta$ represents a measurement $M_g$ where the depolarizing noise $p$, has made it such that its effects become merely convex combinations of the other effects in the scenario. Because one cannot prove contextuality with just the four preparations and their corresponding effects alone, scenarios with $p\geq1-\cos\theta$ admit of a noncontextual ontological model. If trace distance is the quantifier of coherence employed, as per Equation~1 in the main text, then the following inequality
	\begin{equation}\label{resource}
		C(\ket{\psi}\!\!\bra{\psi})>\sqrt{p(2-p)}
	\end{equation}
	tells us how much coherence the prepared states and measurements must start with so that the scenario can still prove contextuality despite the noise. Nevertheless, the scenario becomes more and more sensitive to dephasing noise as the impurity $p$ increases, such that even for relatively small values of $p$ the maximum robustness achieved decreases considerably. Notice still that for undisturbed measurement $M_g$ ($p=0$), inequality \ref{resource} agrees with Result $1$ of the main text: proofs of contextuality will be achieved as long $C(\ket{\psi}\!\!\bra{\psi})>0$.

	\bigskip 
	\twocolumngrid
	\bibliographystyle{apsrev4-2}
	\bibliographystyle{apsrev4-2}

\end{document}